\newtheorem{Theorem}{Theorem}[section]
\newtheorem{Definition}{Definition}[section]
\newtheorem{proposition}{Proposition}[section]
\theoremstyle{remark}
\title{Entropy--Rank Ratio: A Novel Entropy--Based Perspective for DNA Complexity and Classification}
\author{%
Emmanuel Pio Pastore\orcidlink{0009-0007-7851-4414}\textsuperscript{1},
Giuseppe Passarino\orcidlink{0000-0003-4701-9748}\textsuperscript{1},
Peppino Sapia\orcidlink{0000-0003-3794-3232}\textsuperscript{2},
Francesco De Rango\orcidlink{0000-0002-2328-8487}\textsuperscript{1}\thanks{Corresponding author: francesco.derango@unical.it}\\[0.75em]
\textsuperscript{1}Department of Biology, Ecology and Earth Science\\
University of Calabria, 87036 Rende, Italy\\
\textsuperscript{2}Department of Mathematics and Computer Science\\
University of Calabria, 87036 Rende, Italy
}
\date{November 06 2025}
\begin{document}

\maketitle

\fontsize{11.5}{14}\selectfont
\sloppy

\begin{abstract}
Shannon entropy is widely used to measure the complexity of DNA sequences but suffers from saturation effects that limit its discriminative power for long uniform segments. We introduce a novel metric, the \emph{entropy rank ratio} $R$, which positions a target sequence within the full distribution of all possible sequences of the same length by computing the proportion of sequences that have an entropy value equal to or lower than that of the target. In other words, $R$ expresses the relative position of a sequence within the global entropy spectrum, assigning values close to 0 for highly ordered sequences and close to 1 for highly disordered ones. DNA sequences are partitioned into fixed-length subsequences and non-overlapping $n$-mer groups; frequency vectors become ordered integer partitions and a combinatorial framework is used to derive the complete entropy distribution. Unlike classical measures, \(R\) is a normalized, distribution-aware measure bounded in \([0,1]\) at fixed \((T,n)\), which avoids saturation to \(\log_2 4\) and makes values comparable across sequences under the same settings. We integrate $R$ into data augmentation for convolutional neural networks by proposing ratio-guided cropping techniques and benchmark them against random, entropy-based, and compression-based methods. On two independent datasets, viral genes and human genes with polynucleotide expansions, models augmented via $R$ achieve substantial gains in classification accuracy using extremely lightweight architectures.
\end{abstract}

\maketitle

\section{Introduction}

DNA is the fundamental molecule of life and is found in all living organisms. It represents a code formed by four distinct letters: the nitrogenous bases (Adenine, Cytosine, Guanine, Thymine). These bases, by combining in sequences of varying lengths and complexity, give rise to highly variable genetic information\cite{ref1}. In fact, it has been known since the last century that DNA sequences encoding proteins are read in groups of three consecutive letters; in total, 64 triplets encode 20 amino acids (as well as the start and stop signals of the code)\cite{ref2}.

Since the discovery of this coding system, efforts have been made to represent the information and complexity of DNA sequences in a mathematical and information-theoretic perspective\cite{ref3,ref4}. Since the 1940s, mathematical and algebraic techniques have been applied to this subject, but it was only after the discovery of the structure of DNA and the introduction of Shannon entropy that it became theoretically possible to calculate the entropy of DNA sequences\cite{ref4,ref5,ref6}. This development sparked strong interest in the topic during the 1980s and 1990s\cite{ref4,ref6,ref7}, aided by the increasing availability of computational resources. However, several challenges arise when calculating the Shannon entropy of sequences: without additional precautions, the entropy of sufficiently long sequences converges to 2 bits, and the practical applications of entropy itself are not immediately evident.

Entropy-based techniques have, however, been widely used in Machine Learning for a long time\cite{ref8,ref9}, and have also been applied in DNA sequence classifiers\cite{ref10,ref11}. Neural networks commonly used in this regard are CNNs (Convolutional Neural Networks), which capture local sequence patterns and aggregate them into global representations\cite{ref12,ref13}.

The objective of this paper is to develop a new method for characterizing the complexity of DNA sequences and to integrate it with the aforementioned neural network architectures in order to significantly improve precision and performance on datasets of viral genes from a GitHub repository\cite{ref14}, and of human genes associated with polynucleotide expansion\cite{ref15}.

\section{Materials and Methods}

DNA can be viewed as an alphabet \cite{ref6,ref16,ref17,ref18} $\mathcal{A}$ consisting of $\lambda_1 = 4$ different single-digit letters 
\begin{equation}
\mathcal{A} = \{\mathrm{A}, \mathrm{C}, \mathrm{G}, \mathrm{T}\}.
\end{equation}

\begin{Definition}\label{def:1.1}
Let 
\(V=\mathbb{R}^{\lambda_1}\)
be the set of lists (vectors) whose components are indexed by the alphabet 
\(\mathcal{A} = \{\mathrm{A},\mathrm{C},\mathrm{G},\mathrm{T}\}\).

Any DNA sequence \(w\) of length \(L\) is represented by a frequency vector
\(x\in V\).
 Here, \(\mathrm{length}(x)\) denotes the number of components in the frequency vector \(x\) corresponding to the counts of each letter.

Let \(x_j \in \mathcal{A}\) for \(j = 1, \dots, \lambda_1\) be a letter of the alphabet. Every DNA sequence can be represented as a frequency vector 
\begin{equation}
x = (a_1,\, a_2,\, a_3,\, a_4),
\end{equation}
where \(a_j\) denotes the number of occurrences of the letter \(x_j\) in the sequence \(w\). By convention, we order the components such that
\begin{equation}
a_1 \ge a_2 \ge a_3 \ge a_4 \ge 0.
\end{equation}
\textbf{Note:} This ordering is adopted solely for computational efficiency; it loses the original correspondence between each count and its specific nucleotide (or \(n\)-tuple), which is irrelevant to the calculation of Shannon entropy. In our representation, it is also important to include components with zero count to avoid confounding later definitions.
\end{Definition}

\begin{Definition}\label{def:1.2}
From Definition~\ref{def:1.1}, we have that the frequency vector \(x\) has \(\lambda_1\) components, i.e.,
\begin{equation}
length(x) = \lambda_1
\end{equation}
\cite{ref19}
\end{Definition}

\begin{Definition}\label{def:1.3}
We can generalize the alphabet by grouping letters into sequences of length \(n\), called \emph{$n$-tuples}\cite{ref20}. The alphabet for \(n\)-tuples is denoted by \(\mathcal{A}_n\), where each element \(x_j \in \mathcal{A}_n\) is an \(n\)-tuple derived from \(\mathcal{A}\).
\end{Definition}

\begin{Definition}\label{def:1.4}
Since each position in an \(n\)-tuple can be any of the 4 letters, the size (cardinality) of the alphabet for \(n\)-tuples is
\begin{equation}
\lambda_n = 4^n
\end{equation}
Thus, by Definition~\ref{def:1.3}, the frequency vector \(x\) has 
\begin{equation}
length(x) = 4^n
\end{equation}
Accordingly, a sequence \(w\) can be represented by the vector
\begin{equation}
x = (a_1,\, a_2,\, \ldots,\, a_{\lambda_n}),
\end{equation}
where \(a_j\) denotes the occurrence count of the \(n\)-tuple \(x_j\). We conventionally order these counts in non-increasing order:
\begin{equation}
a_1 \ge a_2 \ge \cdots \ge a_{\lambda_n }\ge 0.
\end{equation}
(Compare with the ordering in Definition~\ref{def:1.1}.)

\textbf{Note:} For the sake of notational simplicity, we shall denote \(\lambda_n = 4^n\) simply as \(\lambda\).
Hence, throughout the paper we implicitly assume
\(\lambda=\lambda_n=4^n\); whenever \(n\) changes,
so does \(\lambda\).  In particular, for \(n=1\) we have
\(\lambda=4\).

\end{Definition}

\begin{Definition}\label{def:1.5}
\textbf{(Case \(n=1,\ L>0\))} Let 
\begin{equation}
b_j = \frac{a_j}{L}
\end{equation}
be the relative frequency of the letter \(x_j\) \emph{when \(n=1\)}, so that
\begin{equation}
\sum_{j=1}^\lambda a_j = L,
\end{equation}
and hence
\begin{equation}
\sum_{j=1}^\lambda b_j = 1.
\end{equation}

\textbf{(Case \(n>1,\ L\ge n\))} When \(n>1\) and the sequence \(w\) is divided into non-overlapping \(n\)-tuples, let 
\begin{equation}
M = \left\lfloor \frac{L}{n} \right\rfloor
\quad \text{and define}\quad
b_j = \frac{a_j}{M}.
\end{equation}
This is the normalized frequency of the letter or group of letters (n-tuple).
Hence 
\begin{equation}
\sum_{j=1}^\lambda a_j = M,
\quad\text{and}\quad
\sum_{j=1}^\lambda b_j = 1.
\end{equation}
\end{Definition}

\begin{Definition}\label{def:1.6}
A sequence \(w\) can be divided into ordered \(n\)-tuples, indexed by \(m\), indicating the position of the \(n\)-tuple in the sequence. By counting these \(n\)-tuples, we determine the occurrences \(a_j\) of each \(n\)-tuple type \(x_j\), which form the components of the vector \(x\).
\end{Definition}

\begin{Definition}\label{def:1.7.1}
Consider a sequence \(w = (w_1, w_2, \ldots, w_L)\) and let \(x_i^m\) denote the \(i\)-th letter in the \(m\)-th \(n\)-tuple. The set of \(n\)-tuples is said to be \emph{non-overlapping} if each letter in \(w\) is assigned to exactly one \(n\)-tuple. This is achieved by defining the correspondence
\begin{equation}
w_k \quad \text{with} \quad k = n(m-1) + i, \quad \text{for } i = 1,\dots,n, m = 1,\dots,M
\end{equation}
where \(M\) is the total number of non-overlapping \(n\)-tuples.
\end{Definition}

\begin{Definition}\label{def:1.8}
Here, \(\lfloor \cdot \rfloor\) denotes the integer part of the division. Note that, without such operation, \(M\) is an integer if and only if \(L\) is a multiple of \(n\). Otherwise, there will be a remainder of letters given by
\begin{equation}
z = L - n\,M = L \bmod n.
\end{equation}
\end{Definition}

\begin{Definition}\label{def:1.10}
The letters \(w_{L-z+1}\) to \(w_L\) (when \(z \neq 0\)) are not included in any \(n\)-tuple and are therefore ignored in subsequent calculations.

We can also divide the sequence \(w\) of length \(L\) into \(N\) equal subsequences\cite{ref21} \(w_t\), each of length \(T\), where 
\begin{equation}
N = \left\lfloor \frac{L}{T} \right\rfloor.
\end{equation}
\end{Definition}

\subsection{Shannon Entropy of a DNA Sequence \texorpdfstring{$w$}{w}}

To measure the entropy of a DNA sequence, one must first choose the method of assigning an entropy value\cite{ref22,ref23}. For example, one could refer to thermodynamic entropy\cite{ref23}, topological entropy\cite{ref6,ref17}, or Shannon entropy\cite{ref5}, among others. In this context, Shannon entropy will be used.

The classical Shannon entropy of a sequence \(w\), denoted \(S(w)\), is defined (based on previous definitions) as follows:
\begin{equation}
S(w) = -\sum_{j=1}^\lambda b_j \log(b_j),
\label{eq:2.1}
\end{equation}
where, by convention, \(\log\) denotes the base-2 logarithm. 

\begin{Definition}\label{def:2.1}
If \(b_j = 0\), we set \(b_j \log(b_j) = 0\) in the summation to represent the fact that this element is not counted in the calculation of Shannon entropy. This convention is useful for maintaining vectors with a consistent number of elements.
\end{Definition}

Here, Shannon entropy is essentially a measure of the informational complexity (in bits) of an alphabetical sequence\cite{ref6}.

Now, we address a previously overlooked question: why introduce a generalized alphabet of \(n\)-tuples?

The answer is straightforward: for very long DNA sequences generated by a uniform i.i.d. process, the distribution of letters becomes uniform, leading to
\begin{equation}
S(w)_{\text{with}\,L\gg 1} = -\sum_{j=1}^{4} b_j \log(b_j) \;\rightarrow\; \log(4) \;=\; 2,
\label{eq:2.2}
\end{equation}
since for \(L\gg 1\) (and \(n=1\)) \(b_j \simeq 0.25\) for all \(1 \le j \le 4\)\cite{ref7}.

Generally, for very long sequences and very short non-overlapping \(n\)-tuples generated by a uniform process:
\begin{equation}
S(w)_{\text{with}\,L\gg 1} = -\sum_{j=1}^\lambda b_j \log(b_j) \rightarrow \log(\lambda).
\label{eq:2.3}
\end{equation}

However, this issue becomes less pronounced for short \(n\)-tuples. Conversely, if we increase the length of the tuples, for example when \(n \to L\), the entropy \(S \to 0\) because very few \(n\)-tuples will be represented among the total possible \(\lambda\) \(n\)-tuples.

A further method, that distinguishes our approach, is the assignment of an \emph{average} entropy value\cite{ref24}  (which we will be referring to as \(S_w\) throughout the rest of the paper to avoid confusion with the classical case) to a sequence \(w\). This value is defined as the average of the entropies \(S^t_w\) of the \(t\)-th subsequences \(w_t\) of \(w\). This approach is computationally more efficient and avoids the aforementioned problem of saturating into the maximum possible entropy.
\\

In the following theorem, the index will be understood as that of a subsequence within a sequence.

\begin{Theorem}[Concatenation with exact bound]
\label{thm:1}
Fix $T,n$ and let $\lambda=4^n$. For sequences $o,v$ with lengths
$L_o,L_v$, write $L_o=N_oT+r_o$ and $L_v=N_vT+r_v$ with $0\le r_o,r_v<T$.
Let $S_o$ (resp.\ $S_v$) be the mean block entropy over the $N_o$ (resp.\ $N_v$) full $T$-blocks of $o$ (resp.\ $v$).
Form $w=o\|v$ and let $S_w$ be the mean over the first $N:=N_o+N_v$ full $T$-blocks of $w$.
Then
\begin{equation}
S_w=\frac{N_oS_o+N_vS_v}{N}.
\end{equation}
Moreover, with
\begin{equation}
\theta=\frac{L_oS_o+L_vS_v}{L_o+L_v},
\end{equation}
we have the bound
\begin{equation}
\bigl|\,\theta-S_w\,\bigr|\;\le\;\frac{\log(\lambda)}{TN}\,(r_o+r_v).
\end{equation}
In particular, when $r_o=r_v=0$ the identity $\theta=S_w$ holds exactly.
\end{Theorem}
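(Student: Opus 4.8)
The plan is to prove the two assertions in turn: first the exact block-mean identity, then the comparison with the length-weighted average $\theta$. For the identity I would use that the $T$-block decomposition of $w=o\|v$ realigns cleanly, so that the first $N_o$ full blocks of $w$ are exactly the $N_o$ full blocks of $o$ and the next $N_v$ are exactly the $N_v$ full blocks of $v$. Since $S_w,S_o,S_v$ are arithmetic means of per-block entropies, the total block-entropy is additive under this concatenation, giving $NS_w=N_oS_o+N_vS_v$ and hence the stated identity. The one point that genuinely needs care is block alignment at the junction: the identity holds precisely because no full block of $w$ straddles the $o$--$v$ boundary, so the discarded remainders $r_o,r_v$ never enter $S_w$. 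I would flag this as the main obstacle, since a block mixing the tail of $o$ with the head of $v$ would destroy additivity; this is exactly what forces $r_o,r_v$ to appear only in the comparison with $\theta$ and not in $S_w$ itself.

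Granting the identity, the bound is a direct computation. I would substitute $L_o=N_oT+r_o$ and $L_v=N_vT+r_v$ into $\theta$, form $\theta-S_w$ over the common denominator $N(L_o+L_v)$, and expand the numerator. Collecting the coefficients of $S_o$ and $S_v$ separately and using $N=N_o+N_v$, I expect the bilinear-in-length terms to cancel and the numerator to collapse to $(N_oL_v-N_vL_o)(S_v-S_o)$, with $N_oL_v-N_vL_o=N_or_v-N_vr_o$. Thus I anticipate the clean closed form
\begin{equation}
\theta-S_w=\frac{(N_or_v-N_vr_o)\,(S_v-S_o)}{N\,(L_o+L_v)}.
\end{equation}

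It then remains to estimate each factor. Since every per-block entropy lies in $[0,\log(\lambda)]$, the means satisfy $S_o,S_v\in[0,\log(\lambda)]$, so $|S_v-S_o|\le\log(\lambda)$. For the integer factor I would use $N_o,N_v\le N$ with nonnegativity of $r_o,r_v$ to get $|N_or_v-N_vr_o|\le N_or_v+N_vr_o\le N(r_o+r_v)$. Finally $L_o+L_v=NT+(r_o+r_v)\ge NT$ bounds the denominator from below. Combining the three estimates yields
\begin{equation}
|\theta-S_w|\le\frac{\log(\lambda)\,N(r_o+r_v)}{N\cdot NT}=\frac{\log(\lambda)}{TN}\,(r_o+r_v),
\end{equation}
which is the claimed inequality. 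The equality case is immediate from the closed form: when $r_o=r_v=0$ the numerator vanishes, so $\theta=S_w$ exactly. The only routine care needed is keeping the $S_o$/$S_v$ bookkeeping straight during the expansion; the inequalities themselves are elementary.
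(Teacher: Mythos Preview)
Your argument for the bound is correct and essentially identical to the paper's: both write $\theta$ and $S_w$ as convex combinations of $S_o,S_v$ with weights $L_o/(L_o+L_v)$ and $N_o/N$ respectively, obtain the same closed form
\[
\theta-S_w=\frac{(N_or_v-N_vr_o)(S_v-S_o)}{N(L_o+L_v)}
\]
(the paper packages this as $(\beta-\alpha)(S_o-S_v)$ with $\beta=L_o/(L_o+L_v)$, $\alpha=N_o/N$), and then apply the same three estimates $|S_o-S_v|\le\log\lambda$, $|N_or_v-N_vr_o|\le N_or_v+N_vr_o\le N(r_o+r_v)$, and $L_o+L_v=NT+r_o+r_v\ge NT$.

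The one place your write-up is not quite right is the justification of the identity itself. You assert that ``no full block of $w$ straddles the $o$--$v$ boundary,'' but this fails whenever $r_o>0$: the $(N_o{+}1)$-th $T$-block of $w=o\|v$ occupies positions $N_oT+1$ through $(N_o{+}1)T$, and since $o$ ends at position $N_oT+r_o$, that block mixes the last $r_o$ symbols of $o$ with the first $T-r_o$ symbols of $v$. The paper does not attempt to argue this point at all; it simply opens with $S_w=\alpha S_o+(1-\alpha)S_v$, in effect treating the identity as definitional (equivalently, reading ``the first $N$ blocks of $w$'' as the $N_o$ full blocks of $o$ together with the $N_v$ full blocks of $v$, with the remainder $r_o$ discarded before the blocking restarts in $v$). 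Under that reading the identity is immediate, and your derivation of the bound then matches the paper's line for line.
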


\begin{proof}
Write $\theta=\beta S_o+(1-\beta)S_v$ with $\beta=\dfrac{L_o}{L_o+L_v}$ and
$S_w=\alpha S_o+(1-\alpha)S_v$ with $\alpha=\dfrac{N_o}{N}$, $N=N_o+N_v$.
Then
\begin{equation}
\theta-S_w=(\beta-\alpha)(S_o-S_v),
\end{equation}
so
\begin{equation}
\bigl|\,\theta-S_w\,\bigr|\le |\,\beta-\alpha\,|\cdot\bigl|S_o-S_v\bigr|\le |\,\beta-\alpha\,|\cdot\log(\lambda).
\end{equation}
Using $L_o=TN_o+r_o$ and $L_v=TN_v+r_v$ we compute
\begin{equation}
\beta-\alpha
=\frac{TN_o+r_o}{TN+r_o+r_v}-\frac{N_o}{N}
=\frac{r_oN_v-N_or_v}{N\bigl(TN+r_o+r_v\bigr)}.
\end{equation}
Hence
\begin{equation}
|\,\beta-\alpha\,|
\le \frac{r_oN_v+N_or_v}{N\bigl(TN+r_o+r_v\bigr)}
\le \frac{(r_o+r_v)N}{N\cdot TN}
=\frac{r_o+r_v}{TN}.
\end{equation}
Combining the inequalities gives the stated bound. If $r_o=r_v=0$, then $L_o=TN_o$, $L_v=TN_v$ and
\begin{equation}
\theta=\frac{TN_oS_o+TN_vS_v}{TN}=\frac{N_oS_o+N_vS_v}{N}=S_w.\qedhere
\end{equation}
\end{proof}

We generated three sets of entropy profiles to assess the behavior of the average Shannon entropy \(S_w\) under different slicing parameters.  In Figures \ref{fig:1.1.1} and \ref{fig:1.1.2}, \(S_w\) is shown as a function of the number \(N\) of non-overlapping subsequences (blocks) for a fixed sequence length \(L=1000\), using single bases (\(n=1\)) and triplets (\(n=3\)), respectively.  In both cases, the block size was set to 
\begin{equation}
T = \left\lfloor \frac{L}{N} \right\rfloor,
\end{equation}
and any residual bases were discarded.  Markers were placed every \(\Delta N=10\).  

\begin{figure}[H]
    \centering
    \includegraphics[width=\linewidth]{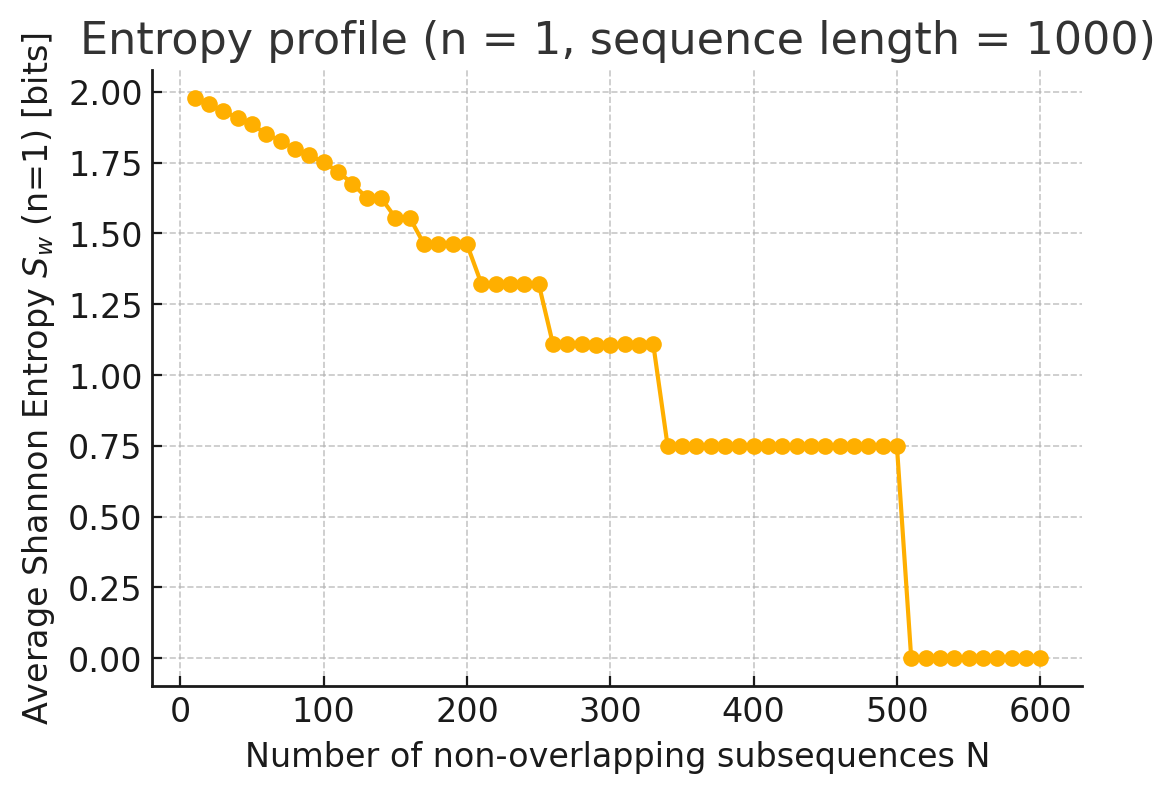}
    \caption{Average entropy \(S_w\) of a 1000-base random DNA sequence versus the number of non-overlapping subsequences \(N\), computed with single bases (\(n=1\)).  Each point is the mean over 50 independent random sequences; the decay is nonlinear and concave, and \(S_w\to0\) once \(T=1\) base.}
    \label{fig:1.1.1}
\end{figure}

\begin{figure}[H]
    \centering
    \includegraphics[width=\linewidth]{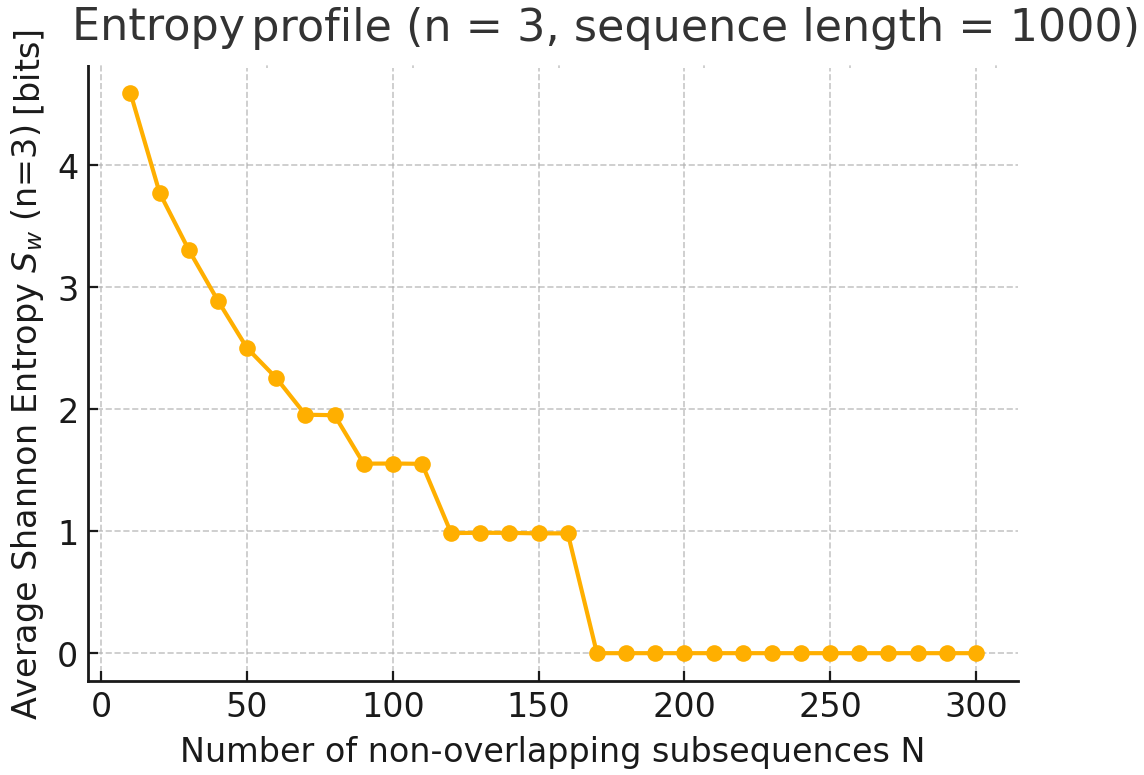}
    \caption{Average entropy \(S_w\) for the same 1000-base sequences, using triplets (\(n=3\)).  Points are averaged over 50 sequences; the entropy falls steeply and reaches zero when blocks no longer contain a full triplet (\(T<3\)).}
    \label{fig:1.1.2}
\end{figure}

In Figure \ref{fig:1.2}, we fix \(N=1\) (no blocking) and plot the Shannon entropy of the entire sequence as a function of the \(n\)-tuple length \(n\), for \(1\le n\le50\).  A clear maximum appears when the alphabet size \(4^n\) becomes comparable to the number of observed \(n\)-tuples; for \(L=1000\) this happens around \(n\approx6\). Beyond that point, undersampling dominates and entropy decreases as \(4^n\) grows relative to the sample size.

\begin{figure}[H]
    \centering
    \includegraphics[width=\linewidth]{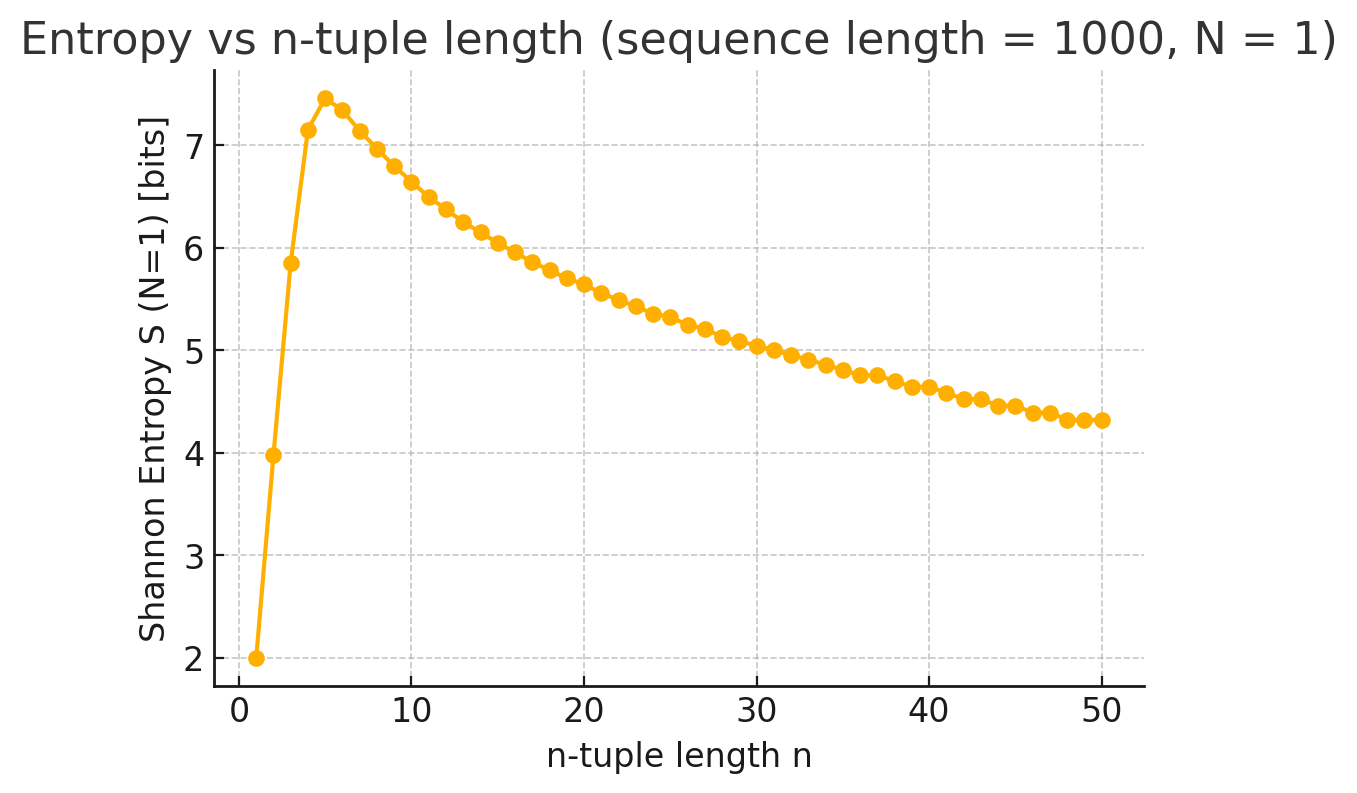}
    \caption{Shannon entropy \(S\) of a 1000-base random DNA sequence versus the \(n\)-tuple length \(n\) (no blocking, \(N=1\)), averaged over 50 sequences.  The curve peaks near \(n\approx6\) and then falls off as \(4^n\) approaches the sample size.}
    \label{fig:1.2}
\end{figure}

These profiles were obtained by Monte Carlo simulation over 50 independent random DNA sequences of length 1000.  For each sequence and each parameter setting (\(N\) or \(n\)), the sequence was partitioned into non-overlapping blocks of length \(T=\lfloor L/N\rfloor\)\,(or into \(n\)-tuples when \(N=1\)), Shannon entropy was computed in each block (or for the full sequence), and then averaged across blocks.  The resulting block-average entropies were finally averaged over all 50 sequences to produce the points shown.

It is important to note that the average entropy \(S_w\) of a sequence \(w\) is \emph{relative} to the chosen values for the \(n\)-tuple length \(n\) and the subsequence length \(T\). The selection of these parameters must allow the assignment of a meaningful average entropy value. Since the concept of entropy is inherently relative, the numerical value of our average entropy (a real scalar) does not have an absolute interpretation; its meaning depends entirely on the reference system (i.e., the chosen parameters). This is the main limitation of the entropy calculation system based on Shannon entropy (aside from the saturation issues described in Equation~\eqref{eq:2.3}). 

\subsection{Giving Meaning to the Entropy Value}

Many previous works have attempted to use alternative entropy systems to obtain distributions or a set of values in order to study the complexity of a sequence\cite{ref25,ref26,ref27}. Here, we introduce the concept of the \emph{distribution of all possible sequences of the same length corresponding to a single entropy value}.

To explain the underlying intuition, consider that a DNA sequence \(w\) of length \(L\) can be one among many sequences of the same length but with different arrangements of letters. For example, take the sequences \texttt{AAAG} (denoted \(o\)) and \texttt{GGTT} (denoted \(v\)); both have \(L=4\) but different letter compositions. If we group their letters into frequency vectors with \(n=1\),
\begin{equation}
x_o = \bigl(\; 3_\mathrm{A},1_\mathrm{G}\bigr) \quad\text{and}\quad
x_v = \bigl(2_\mathrm{G},\; 2_\mathrm{T}\bigr),
\end{equation}
it is evident that these sequences will, in general, exhibit slightly different entropy values (approximately \(0.811278\) and \(1\), respectively). The difference between the entropies of \(o\) and \(v\) is relative to the parameters chosen; for instance, when \(T=4\) (so that the entire sequence constitutes a single subsequence) and \(n=1\), sequence \(o\) yields a lower entropy than \(v\). Such differences are relative and do not possess an absolute meaning when considered in isolation.

Nonetheless, a more meaningful interpretation of the entropy of a sequence can be achieved by relating it not only to another sequence but to \emph{all possible sequences of length \(L\)}. In this way, one derives a measure of the intrinsic complexity of a single sequence, independent. To draw an analogy, imagine two individuals with heights of 1.63 meters and 1.89 meters. Without any reference, one may simply say one is shorter than the other, yet absolute labels such as “short” or “tall” only acquire meaning when compared to the average height of a population. Similarly, rather than comparing individual entropy values to a mean, one may count how many DNA subsequences (with fixed length \(T\) and non-overlapping \(n\)-tuple grouping) exhibit an entropy lower than that of a given sequence \(w\) (computed using the same values of \(T\) and \(n\)). This method yields a distribution of all possible sequences for a given \(T\), ordered by entropy ranging from \(0\) to \(\log(\lambda)\), thus enabling the placement of \(w\) within the overall spectrum of complexity.

\subsection{Formalizing the Concept}

Fix a block (subsequence) length \(T\) and an \(n\)-tuple size \(n\), with non-overlapping \(n\)-tuples inside each block. Let
\begin{equation}
M=\left\lfloor \frac{T}{n}\right\rfloor
\end{equation}
be the number of \(n\)-tuples per block (for \(n=1\) we have \(M=T\)). The effective alphabet size is \(\lambda=\lambda_n=4^n\). Throughout this section we work at the \emph{block level} (i.e., on a segment of length \(T\)).

\begin{Definition}\label{def:2.2}
Let \(\aleph_{T,n}\) denote the total number of distinct entropy values obtainable on a block of length \(T\) with \(n\)-tuples (non-overlapping). Define
\begin{equation}
Y_{T,n}=\{\,y_1,\;y_2,\;\ldots,\;y_{\aleph_{T,n}}\,\}
\end{equation}
as the set of these entropy values, ordered so that
\begin{equation}
\log(\lambda) \ge y_{\aleph_{T,n}} \ge \cdots \ge y_2 \ge y_1=0.
\end{equation}
\end{Definition}

\begin{Definition}\label{def:2.3}
Let 
\begin{equation}
G_{T,n}(y_q) : Y_{T,n} \to \mathbb{N}
\end{equation}
assign to each \(y_q\in Y_{T,n}\) the number of words (blocks) attaining that entropy value.
\end{Definition}

\begin{Definition}\label{def:2.4}
The total number of possible blocks equals
\begin{equation}
\sum_{q=1}^{\aleph_{T,n}} G_{T,n}(y_q) \;=\; \lambda^{\,M} \;=\; (4^n)^{\,\lfloor T/n\rfloor}.
\end{equation}
If \(T \bmod n \neq 0\), the last \(z=T\bmod n\) symbols of the block are ignored for the entropy computation (truncation); the counting always refers to the truncated word of length \(nM\).
\end{Definition}

An important insight follows by representing frequency vectors via integer partitions \cite{ref29}. Let
\begin{equation}
P = (p_1,\ldots,p_\lambda)
\end{equation}
be an \emph{ordered} partition of the total count \(c\) (with \(c=T\) for \(n=1\), or \(c=M\) for \(n>1\)):
\begin{equation}
p_1 \ge \cdots \ge p_\lambda \ge 0,\qquad \sum_{i=1}^{\lambda} p_i = c.
\end{equation}
Denote by \(\mathcal{P}_{\lambda}(c)\) the set of such ordered partitions.

\begin{Definition}\label{def:2.6}
For \(P\in\mathcal{P}_{\lambda}(c)\) define the Shannon entropy
\begin{equation}
S(P) = -\sum_{i=1}^{\lambda} \frac{p_i}{c}\log\!\left(\frac{p_i}{c}\right),
\end{equation}
with the convention \(0\log 0=0\). This induces a mapping
\(
S:\mathcal{P}_{\lambda}(c)\to\mathbb{R}.
\)
\end{Definition}

Let \(k=\#\{i:\,p_i>0\}\) be the number of positive parts. Let \(\{\pi_j\}\) be the distinct positive values among \(\{p_i\}\) and \(r_j\) their multiplicities; set \(\gamma=\prod_j r_j!\). The number of words realizing the frequency vector \(P\) is
\begin{equation}
O(P)
= \binom{\lambda}{k}\,\frac{k!}{\gamma}\;\cdot\;\frac{c!}{\prod_{i=1}^{\lambda} p_i!}
\;=\; \frac{\lambda!}{(\lambda-k)!\,\gamma}\;\frac{c!}{\prod_{i=1}^{\lambda} p_i!}.
\end{equation}

\begin{proposition}[Sanity check]
\label{prop:sumOP}
We have \(\displaystyle \sum_{P\in\mathcal{P}_{\lambda}(c)} O(P) = \lambda^{\,c}\).
\end{proposition}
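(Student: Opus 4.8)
The plan is to read $O(P)$ as a count of words and then to observe that the partition shapes split the set of all $\lambda^c$ words into disjoint classes. Concretely, every word $w$ of length $c$ over the size-$\lambda$ alphabet determines a letter-indexed frequency vector $(a_1,\dots,a_\lambda)$ with $\sum_i a_i=c$, and sorting this vector in non-increasing order yields a unique ordered partition $P\in\mathcal P_\lambda(c)$. Thus the map $w\mapsto P$ assigns to each word exactly one partition, so the classes indexed by $P$ partition the full set of words. It therefore suffices to show that $O(P)$ equals the number of words whose frequency vector sorts to $P$, after which the claim follows immediately by summing the class sizes.

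To establish that interpretation I would factor $O(P)$ into its two natural pieces. First, the prefactor $\binom{\lambda}{k}\frac{k!}{\gamma}=\frac{\lambda!}{(\lambda-k)!\,\gamma}$ should be shown to count the number of \emph{distinct compositions}, i.e.\ letter-labeled vectors $(a_1,\dots,a_\lambda)$ that rearrange to $P$: this is exactly the number of distinct permutations of the multiset formed by the parts of $P$, where the $\lambda-k$ zeros contribute the $(\lambda-k)!$ and the repeated positive values of multiplicity $r_j$ contribute the $\gamma=\prod_j r_j!$ in the denominator. Second, for any one such composition the number of words realizing it is the multinomial coefficient $c!/\prod_i a_i!$; since $\prod_i a_i!=\prod_i p_i!$ is invariant under rearrangement, every composition sorting to $P$ shares the common value $c!/\prod_i p_i!$. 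Multiplying the count of compositions by this per-composition word count reproduces exactly the displayed formula for $O(P)$, and simultaneously exhibits $O(P)$ as $\sum_{a\text{ sorts to }P} c!/\prod_i a_i!$.

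With this in hand the summation is routine: summing $O(P)$ over all $P\in\mathcal P_\lambda(c)$ collapses the inner composition sums into a single sum over \emph{all} compositions of $c$ into $\lambda$ nonnegative parts, and the multinomial theorem gives $\sum_{\substack{a_i\ge 0,\ \sum_i a_i=c}} c!/\prod_i a_i!=\lambda^c$. Equivalently, one may bypass the multinomial theorem entirely and simply note that since the classes partition the words and $O(P)$ is the size of the class associated with $P$, the total is the number of all words, namely $\lambda^c$.

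I expect the main obstacle to be the careful verification of the prefactor, i.e.\ confirming that $\frac{\lambda!}{(\lambda-k)!\,\gamma}$ really is the number of distinct letter-labeled frequency vectors of shape $P$. The delicate point is the double source of indistinguishability --- the block of $\lambda-k$ zero entries on one hand, and any repeated positive parts on the other --- so that the multiset-permutation count is $\lambda!$ divided by the product of the factorials of \emph{all} repetition classes, zeros included. Getting this $\gamma$ bookkeeping exactly right, and checking the edge cases $k=\lambda$ (no zeros) and $k=1$ (a single letter carrying all of $c$), is where the argument must be handled with care; the remaining algebra and the final summation are then immediate.
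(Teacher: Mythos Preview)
Your proposal is correct and follows essentially the same approach as the paper's own proof: interpret $O(P)$ as the number of words whose frequency vector sorts to $P$ by factoring it into (number of letter-labeled compositions of shape $P$) times (multinomial words-per-composition), and then sum over all compositions via the multinomial theorem to obtain $\lambda^c$. The paper's argument is organized identically, so there is nothing substantive to add.
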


\begin{proof}
Let \(\Sigma\) be an alphabet of size \(\lambda\) and fix a length \(c\in\mathbb{N}\).
For a word \(w\in\Sigma^{c}\) let \(f(w)=(f_1,\dots,f_\lambda)\in\mathbb{N}^\lambda\) be its
frequency vector, so that \(\sum_{i=1}^\lambda f_i=c\).
For a fixed frequency vector \(f\), the number of words that realize exactly \(f\) is
\begin{equation}
\frac{c!}{\prod_{i=1}^\lambda f_i!}.
\end{equation}

Group frequency vectors \(f\) by their \emph{ordered partition} \(P\) obtained by sorting
the components of \(f\) in nonincreasing order. Write
\(P=(p_1,\dots,p_\lambda)\) with \(p_1\ge\cdots\ge p_\lambda\ge 0\),
\(\sum_i p_i=c\), and let \(k=\#\{i:\,p_i>0\}\).
Let \(\{\pi_j\}\) be the distinct positive values among \(\{p_i\}\) and
let \(r_j\) be their multiplicities; set \(\gamma=\prod_j r_j!\).

For a fixed \(P\), to recover the unsorted frequency vectors \(f\) that sort to \(P\):
(i) choose which \(k\) symbols of \(\Sigma\) are assigned the positive counts
(\(\binom{\lambda}{k}\) ways); (ii) assign the multiset of positive parts
\(\{\pi_j\}\) to those \(k\) symbols (\(k!/\gamma\) ways, accounting for repeated
values). Hence the number of frequency vectors \(f\) that sort to \(P\) is
\(\binom{\lambda}{k}\,k!/\gamma=\lambda!/((\lambda-k)!\,\gamma)\).
Each such \(f\) contributes exactly \(c!/\prod_i p_i!\) words (since \(p_i\) are
the components of \(f\) in some order and \(0!=1\)).
Therefore the total number of words with “shape” \(P\) equals
\begin{equation}
O(P)=\frac{\lambda!}{(\lambda-k)!\,\gamma}\;\frac{c!}{\prod_{i=1}^{\lambda} p_i!}.
\end{equation}

Summing over all ordered partitions \(P\in\mathcal{P}_{\lambda}(c)\) thus yields
\begin{equation}
\sum_{P\in\mathcal{P}_{\lambda}(c)} O(P)
=\sum_{\substack{f_1,\dots,f_\lambda\ge 0\\ f_1+\cdots+f_\lambda=c}}
\frac{c!}{\prod_{i=1}^\lambda f_i!}
=(1+\cdots+1)^c=\lambda^{\,c}.
\end{equation}
The last equality is the multinomial theorem applied to
\((x_1+\cdots+x_\lambda)^c\) with \(x_i\equiv 1\).
\end{proof}

Since different partitions can share the same entropy, define the discrete distribution
\begin{equation}
G_{T,n}(y) \;=\; \sum_{P\,:\,S(P)=y} O(P).
\end{equation}
This is the distribution used in Definitions~\ref{def:2.2}--\ref{def:2.4}.

We can now construct \(G_{T,n}(y)\). For example, Figure \ref{fig:2} shows the case \(T=20\) and \(n=1\), where \(c=T\) and \(\lambda=4\).

\begin{figure}[H]
    \centering
    \includegraphics[width=1.00\textwidth]{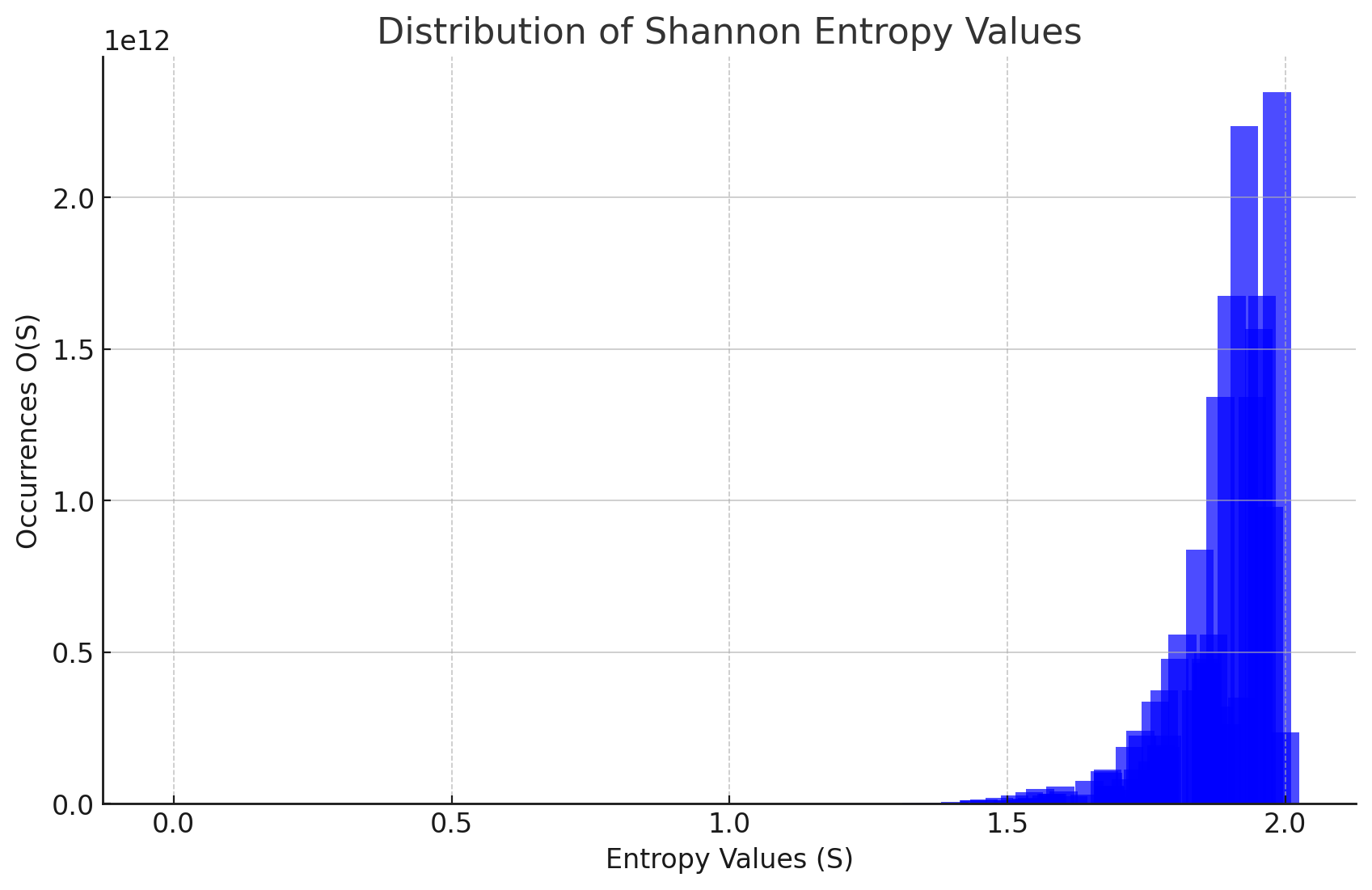}
    \caption{Distribution \(G\) of occurrences \(O\) as a function of entropy values \(S\) (plotted on the horizontal axis, arranged in increasing order from 0 to 2.0), using \(T=20\) and \(n=1\). As described earlier, \(G\) is a discrete distribution.}
    \label{fig:2}
\end{figure}
\begin{Definition}[Entropy–rank ratio]\label{def:R}
Fix the subsequence length \(T\), the \(n\)-tuple size \(n\), and the number of blocks \(N\).
Let \(S_{T,n}^{(N)}(w)\) be the mean Shannon entropy of \(w\) computed by
splitting \(w\) into \(N\) consecutive blocks of length \(T\) and averaging
their block entropies (with non-overlapping \(n\)-tuples inside each block).
Let \(G_{T,n}^{(N)}\) be the distribution of the mean of \(N\) i.i.d. block entropies, i.e. the \(N\)-fold discrete convolution of \(G_{T,n}\) for the sum, followed by a scaling of the support by \(1/N\).

The \emph{entropy–rank ratio} of \(w\) is
\begin{equation}
R_{T,n}^{(N)}(w)\;=\;
\frac{\displaystyle
      \sum_{y \,\leq\, S_{T,n}^{(N)}(w)} G_{T,n}^{(N)}(y)}
     {\displaystyle
      \sum_{y} G_{T,n}^{(N)}(y)}
\;\in (0,1].
\end{equation}

\textit{Practical note.} In our implementation we use \(N=1\), so \(S_{T,n}^{(1)}(w)\)
is the entropy of a single block of length \(T\) and \(G_{T,n}^{(1)}\equiv G_{T,n}\).

\end{Definition}

\begin{proposition}[Block-mean counting as $N$-fold convolution]
\label{prop:Nfold}
Fix $T,n$ and let $G_{T,n}(y)$ be the block-level counting measure of entropies on length-$T$ blocks. Consider sequences of length $NT$ segmented in $N$ consecutive $T$-blocks. Let $G_{T,n}^{(N)}(s)$ count how many such sequences have mean block entropy exactly $s\in \frac{1}{N}\sum_{i=1}^N \mathrm{supp}(G_{T,n})$. Then
\begin{equation}
G_{T,n}^{(N)}(s)
=\sum_{\substack{y_1,\dots,y_N\\(y_1+\cdots+y_N)/N=s}}
\;\prod_{i=1}^N G_{T,n}(y_i),
\end{equation}
i.e.\ $G_{T,n}^{(N)}$ is the $N$-fold discrete convolution of $G_{T,n}$ for the sum, followed by a scaling of the support by $1/N$. Moreover $\sum_s G_{T,n}^{(N)}(s) = \bigl(\sum_y G_{T,n}(y)\bigr)^N$.
\end{proposition}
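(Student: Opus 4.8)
The plan is to recognize that a length-$NT$ sequence, segmented into $N$ consecutive non-overlapping $T$-blocks, is combinatorially the same object as an ordered $N$-tuple of independent length-$T$ words. First I would fix the effective block alphabet $\Sigma$ of $n$-tuples (of size $\lambda=4^n$) and observe that the $T$-block segmentation of Definition~\ref{def:1.10} gives a bijection between a word $w$ of length $NT$ and its block decomposition $(w^{(1)},\dots,w^{(N)})$, where $w^{(i)}$ is the $i$-th block of length $T$. Because the per-block entropy depends only on $w^{(i)}$ (through the truncated word of length $nM$, as in Definition~\ref{def:2.4}), the entropy tuple $(y_1,\dots,y_N)$ with $y_i=S(w^{(i)})$ is determined one block at a time, and the mean block entropy of $w$ is exactly $(y_1+\cdots+y_N)/N$.

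The core step is a factorization by independence of blocks. For a fixed target tuple $(y_1,\dots,y_N)$, the number of length-$NT$ sequences whose $i$-th block has entropy exactly $y_i$ equals the product $\prod_{i=1}^N G_{T,n}(y_i)$, since by Definition~\ref{def:2.3} each $G_{T,n}(y_i)$ counts the length-$T$ blocks realizing entropy $y_i$ and the blocks are chosen independently. A sequence has mean block entropy $s$ precisely when $(y_1+\cdots+y_N)/N=s$, so summing the product over all admissible tuples yields the claimed formula for $G_{T,n}^{(N)}(s)$. Identifying this with the $N$-fold discrete convolution is then immediate: the convolution $\sum_{y_1+\cdots+y_N=\sigma}\prod_i G_{T,n}(y_i)$ is by definition the counting distribution of the \emph{sum} $\sigma=y_1+\cdots+y_N$, and substituting $\sigma=Ns$ (i.e.\ rescaling the support by $1/N$) converts it into the distribution of the \emph{mean}.

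For the normalization identity I would simply drop the constraint and sum $G_{T,n}^{(N)}(s)$ over every attainable mean $s$. Since each length-$NT$ sequence contributes to exactly one value of $s$, the total equals $\sum_{y_1,\dots,y_N}\prod_{i=1}^N G_{T,n}(y_i)$, and the distributive law factors this as $\prod_{i=1}^N\bigl(\sum_{y_i}G_{T,n}(y_i)\bigr)=\bigl(\sum_y G_{T,n}(y)\bigr)^N$. By Definition~\ref{def:2.4} the inner sum equals $\lambda^{M}$, so the grand total is $\bigl(\lambda^{M}\bigr)^N=\lambda^{MN}$, matching the number of blockwise-truncated sequences of length $NT$ and providing a consistency check analogous to Proposition~\ref{prop:sumOP}.

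I do not expect a genuine obstacle here; the result is a clean independence-and-factorization count. The only point requiring care is bookkeeping the support: the $1/N$ rescaling must be applied exactly once, so I would state the sum-convolution first and then record the mean distribution as a relabeling $s=\sigma/N$ of its support, to avoid an off-by-$N$ mismatch between the convolution index and the constraint $(y_1+\cdots+y_N)/N=s$.
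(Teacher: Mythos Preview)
Your proposal is correct and follows essentially the same approach as the paper: bijection between length-$NT$ words and ordered $N$-tuples of $T$-blocks, factorization of the count for a fixed entropy tuple as $\prod_i G_{T,n}(y_i)$, grouping by the mean, and the multiplicative principle for the total. Your write-up is simply more detailed (e.g., the explicit $\lambda^{MN}$ consistency check), but the underlying argument is identical.
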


\begin{proof}
Every length-$NT$ sequence is an ordered concatenation of $N$ length-$T$ blocks. The number of concatenations realizing a fixed $N$-tuple $(y_1,\dots,y_N)$ is $\prod_i G_{T,n}(y_i)$. Grouping by the mean $s=(y_1+\cdots+y_N)/N$ yields the formula. The total count follows by the multiplicative principle.
\end{proof}

\begin{proposition}[Calibration of $R$]
\label{prop:calibration}
Let $Y$ be a random entropy drawn from the counting measure $G_{T,n}$ on length-$T$ blocks (uniform over all blocks after truncation). Define the right-continuous cdf
\begin{equation}
F(y)=\frac{\sum_{u\le y} G_{T,n}(u)}{\sum_{u} G_{T,n}(u)}.
\end{equation}
Then $R=F(Y)$ is super-uniform: for all $t\in[0,1]$,
\begin{equation}
\mathbb{P}(R\le t)\;\le\; t,
\end{equation}
with equality if ties at atoms of $F$ are broken uniformly at random. The same statement holds for $N$ blocks with $G_{T,n}^{(N)}$.
\end{proposition}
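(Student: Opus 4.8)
The plan is to recognize this statement as the probability integral transform (PIT) applied to a discrete distribution, whose super-uniformity is classical, and to give the short self-contained argument tailored to the counting measure $G_{T,n}$. First I would fix notation by listing the distinct entropy values in the support of $G_{T,n}$ as $y_1 < y_2 < \cdots < y_m$ and writing $p_i = G_{T,n}(y_i)/\sum_u G_{T,n}(u)$ for the normalized mass at $y_i$, so that $\sum_i p_i = 1$. By the definition of $F$, the value $F(y_i) = \sum_{j\le i} p_j =: F_i$ is the $i$-th cumulative sum, with the conventions $F_0 := 0$ and $F_m = 1$. Since $Y$ equals $y_i$ with probability $p_i$, the transformed variable $R = F(Y)$ is supported on the finite set $\{F_1,\dots,F_m\}$ and takes the value $F_i$ with probability $p_i$.

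The heart of the super-uniformity bound is then a one-line monotonicity argument. For a fixed $t\in[0,1]$, let $i^{\star}$ be the largest index with $F_{i^{\star}}\le t$ (set $i^{\star}=0$ if no such index exists). Because $R\le t$ holds exactly when $R\in\{F_1,\dots,F_{i^{\star}}\}$, I obtain
\[
\mathbb{P}(R\le t)=\sum_{i=1}^{i^{\star}} p_i = F_{i^{\star}} \le t,
\]
which is precisely the asserted inequality. The only estimate used is $F_{i^{\star}}\le t$, valid by the defining property of $i^{\star}$; the inequality is strict exactly when $t$ falls strictly between two consecutive atom values $F_{i^{\star}}$ and $F_{i^{\star}+1}$, which is why the deterministic map cannot be exactly uniform.

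To recover equality I would introduce the randomized PIT corresponding to uniform tie-breaking: draw $U\sim\mathrm{Uniform}(0,1)$ independent of $Y$ and set $\tilde R = F(Y^-) + U\,\bigl(F(Y)-F(Y^-)\bigr)$, where $F(Y^-)=F_{i-1}$ on the event $\{Y=y_i\}$. This is exactly the prescription that the $G_{T,n}(y_i)$ words sharing entropy $y_i$ be spread uniformly over the interval $(F_{i-1},F_i]$ rather than collapsed onto the single point $F_i$. Conditional on $\{Y=y_i\}$, the variable $\tilde R$ is uniform on $(F_{i-1},F_i]$, an interval of length $p_i$; since these intervals partition $(0,1]$ and are hit with probabilities $p_i$ equal to their lengths, the resulting density is identically $1$, so $\tilde R\sim\mathrm{Uniform}(0,1)$ and $\mathbb{P}(\tilde R\le t)=t$ for every $t$. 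The extension to $N$ blocks is then immediate: by Proposition~\ref{prop:Nfold}, $G_{T,n}^{(N)}$ is again a nonnegative counting measure on a finite set of mean-entropy values, so the identical argument applies verbatim with $G_{T,n}^{(N)}$ and its cdf in place of $G_{T,n}$ and $F$.

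I expect no serious obstacle, since the result is classical. The only point demanding care is the precise meaning of the phrase \emph{ties broken uniformly at random}: one must replace the deterministic step function $R=F(Y)$ by its randomized version $\tilde R$ to convert the inequality into an exact identity, and then verify the conditional-uniform-on-each-atom computation explicitly rather than merely asserting it. A secondary detail worth stating cleanly is the boundary behavior at $t=1$ (where $i^{\star}=m$ and $F_m=1$) and at $t<F_1$ (where $i^{\star}=0$ and the probability is $0\le t$), so that the bound is seen to hold uniformly on all of $[0,1]$.
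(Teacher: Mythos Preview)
Your proposal is correct and follows exactly the approach the paper takes: the paper's own proof simply cites this as ``the standard property of the cdf in the discrete case'' and notes that the $N$-block extension is by direct substitution of $G_{T,n}^{(N)}$. Your write-up fleshes out that one-line appeal with the explicit atom-by-atom argument and the randomized PIT, which is a welcome expansion but not a different route.
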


\begin{proof}
This is the standard property of the cdf in the discrete case: $F(Y)$ is right-continuous uniform (or super-uniform without randomized tie-breaking). The $N$-block case follows by replacing $G_{T,n}$ with $G_{T,n}^{(N)}$.
\end{proof}

\subsection{Implementation}

In our implementation, the DNA sequences are first converted into integer arrays by mapping each nucleotide \(A\), \(C\), \(G\), \(T\) to 0, 1, 2, 3, respectively for tokenization, while reserving the value 4 for the padding symbol. For the purpose of entropy computation, each sequence is partitioned into non‑overlapping \(n\)-tuples and the frequency vector is computed from these groups. The Shannon entropy is then calculated based on Equation~\eqref{eq:2.1}. Data augmentation is achieved by cropping the sequence into fixed‑length segments using either:\\
1. A Random Crop.\\
2. A Kolmogorov complexity‑based crop, that selects subsequences whose Kolmogorov complexity, calculated with the help of the zlib library, is close to that of the entire sequence.\\
3. An entropy‑based crop that selects subsequences whose Shannon entropy is close to that of the entire sequence.\\
4. A Ratio‑based crop that selects subsequences whose entropy ratio \(R\) is close to that of the entire sequence. \\
5. Basic Crop that provides technically no real augmentation which will be the reference (we are not comparing against full‐sequence models, as our sole objective is to evaluate the efficiency of the complexity measures).\\
The CNN classifier uses an embedding layer with a vocabulary size of 5 (corresponding to the four nucleotides plus the padding token), followed by convolutional layers, adaptive pooling, dropout, and a fully connected layer for classification. Although many CNN‑based approaches utilize k‑mer tokenization\cite{ref33} with larger vocabularies, our implementation adheres to single‑nucleotide tokenization combined with non‑overlapping \(n\)-tuple based entropy analysis.

\begin{algorithm}[H]
\caption{Random Crop}
\begin{algorithmic}[1]
\Procedure{RandomCrop}{seq, target\_len, offset\_ratio}
    \State \textbf{input:} seq, target\_len, offset\_ratio
    \State \textbf{if} $\mathrm{length}(\mathrm{seq}) \leq \mathrm{target\_len}$ \textbf{then}
        \State \quad \textbf{return} \texttt{pad\_or\_sample\_sequence}(seq, $\mathrm{target\_len}$)
    \State \textbf{else}
        \State \quad $\mathrm{center} \gets \left\lfloor \dfrac{\mathrm{length}(\mathrm{seq}) - \mathrm{target\_len}}{2} \right\rfloor$
        \State \quad $\mathrm{max\_offset} \gets \left\lfloor \bigl(\mathrm{length}(\mathrm{seq}) - \mathrm{target\_len}\bigr)\times \mathrm{offset\_ratio} \right\rfloor$
        \State \quad $\mathrm{offset} \gets \mathrm{RandomInteger}(-\mathrm{max\_offset},\, \mathrm{max\_offset})$
        \State \quad $\mathrm{start} \gets \mathrm{Clip}\bigl(\mathrm{center} + \mathrm{offset},\, 0,\, \mathrm{length}(\mathrm{seq}) - \mathrm{target\_len}\bigr)$
    \State \textbf{endif}
    \State \textbf{return} seq[$\mathrm{start} : \mathrm{start} + \mathrm{target\_len}$]
\EndProcedure
\end{algorithmic}
\end{algorithm}

\noindent
\textbf{Algorithm: Random Crop.}\\
This procedure crops a sequence to a fixed length while allowing controlled randomness around its center. If the sequence is shorter than or equal to the target length, it is processed by \texttt{pad\_or\_sample\_sequence}. Otherwise, the center of the sequence is computed, a maximum offset is determined by the given \texttt{offset\_ratio}, and a random offset is applied. The starting index is then clipped to ensure it is within the valid range, and the fixed-length subsequence is returned.

\begin{algorithm}[H]
\caption{R-based Crop}
\begin{algorithmic}[1]
\Procedure{RatioBasedCrop}{seq, target\_len, num\_candidates, offset\_ratio, ratio\_whole\_seq, $T$, $n$, $\alpha$, $\beta$}
    \State \textbf{input:} seq, target\_len, num\_candidates, offset\_ratio, ratio\_whole\_seq, $T$, $n$, $\alpha$, $\beta$
    \State \textbf{if} $\mathrm{length}(\mathrm{seq}) \leq \mathrm{target\_len}$ \textbf{then}
        \State \quad \textbf{return} \texttt{pad\_or\_sample\_sequence}(seq, $\mathrm{target\_len}$)
    \State \textbf{endif}
    \State arr $\gets$ \texttt{Array}(seq) \Comment{0-based indexing}
    \State $L \gets \mathrm{length}(\mathrm{arr})$
    \State $\mathrm{center} \gets \left\lfloor \dfrac{L - \mathrm{target\_len}}{2} \right\rfloor$
    \State $\mathrm{max\_offset} \gets \left\lfloor (L - \mathrm{target\_len})\times \mathrm{offset\_ratio} \right\rfloor$
    \State candidate\_offsets $\gets$ $\texttt{RandomIntegers}(-\mathrm{max\_offset},\, \mathrm{max\_offset},\, \mathrm{num\_candidates})$
    \State $\mathrm{best\_score} \gets \infty$; \quad $\mathrm{best\_candidate} \gets \mathrm{None}$
    \State \textbf{for} $i \gets 0$ \textbf{to} $\mathrm{num\_candidates}-1$ \textbf{do}
        \State \quad $\mathrm{start} \gets \mathrm{center} + \mathrm{candidate\_offsets}[i]$
        \State \quad \textbf{if} $\mathrm{start} < 0$ \textbf{then} $\mathrm{start} \gets 0$ \textbf{endif}
        \State \quad \textbf{if} $\mathrm{start} > L - \mathrm{target\_len}$ \textbf{then} $\mathrm{start} \gets L - \mathrm{target\_len}$ \textbf{endif}
        \State \quad segment $\gets$ arr[$\mathrm{start} : \mathrm{start} + \mathrm{target\_len}$]
        \State \quad $\mathrm{cand\_ratio} \gets \texttt{calculate\_ratio}(\mathrm{segment},\, T,\, n)$ \Comment{same $(T,n)$ as \texttt{ratio\_whole\_seq}}
        \State \quad $\mathrm{ratio\_diff} \gets \bigl|\,\mathrm{cand\_ratio} - \mathrm{ratio\_whole\_seq}\,\bigr|$
        \State \quad \textbf{if} $\mathrm{max\_offset} > 0$ \textbf{then}
            \State \qquad $\mathrm{offset\_penalty} \gets \dfrac{\bigl|\,\mathrm{candidate\_offsets}[i]\,\bigr|}{\mathrm{max\_offset}}$
        \State \quad \textbf{else}
            \State \qquad $\mathrm{offset\_penalty} \gets 0$
        \State \quad \textbf{endif}
        \State \quad $\mathrm{score} \gets \alpha \times \mathrm{ratio\_diff} + \beta \times \mathrm{offset\_penalty}$
        \State \quad \textbf{if} $\mathrm{score} < \mathrm{best\_score}$ \textbf{then}
            \State \qquad $\mathrm{best\_score} \gets \mathrm{score}$; \quad $\mathrm{best\_candidate} \gets \mathrm{segment}$
        \State \quad \textbf{endif}
    \State \textbf{endfor}
    \State \textbf{return} $\mathrm{best\_candidate}$
\EndProcedure
\end{algorithmic}
\end{algorithm}

\noindent
\textbf{Algorithm: R-based Crop.}\\
This procedure crops a sequence to a fixed length using a ratio-based selection mechanism. If the sequence is shorter than or equal to the target length, it is processed by \texttt{pad\_or\_sample\_sequence}. Otherwise, the sequence is converted into an array, a central index is computed, and a set of candidate offsets is generated. For each candidate, the starting index is clipped, and a subsequence is extracted. A ratio for the candidate segment is computed and compared to the provided \texttt{ratio\_whole\_seq}. A score is calculated by combining the ratio difference (weighted by $\alpha$) and an offset penalty (weighted by $\beta$). The candidate with the minimum score is selected and returned as the best representative subsequence.

\textbf{Note:}
Parameters $\alpha$ and $\beta$ weigh, respectively, ratio coherence and offset penalty; they must be tuned on validation (e.g., grid/random search). Setting $\beta=0$ removes recentering pressure.

\begin{algorithm}[H]
\caption{Compress Subchunk}
\begin{algorithmic}[1]
\Procedure{compressSubchunk}{chunk}
    \State \textbf{if} $\mathrm{size}(\mathrm{chunk}) = 0$ \textbf{then}
        \State \quad \textbf{return} $(0, \mathrm{chunk})$
    \State \textbf{endif}
    \State bytes\_buf $\gets$ \texttt{PackAsBytes}(chunk) \Comment{e.g. cast to uint8 and pack values 0--4}
    \State cached\_val $\gets$ Retrieve \texttt{bytes\_buf} from \texttt{\_KOLMOGOROV\_CACHE}
    \State \textbf{if} $\mathrm{cached\_val}$ is not $\mathrm{None}$ \textbf{then}
        \State \quad \textbf{return} $(\mathrm{cached\_val},\, \mathrm{chunk})$
    \State \textbf{endif}
    \State comp $\gets$ \texttt{Compress}(bytes\_buf, level $=$ 1)
    \State $\mathrm{length} \gets \mathrm{Length}(\mathrm{comp})$
    \State \textbf{if} $\mathrm{Size}(\texttt{\_KOLMOGOROV\_CACHE}) < \texttt{\_KOLMOGOROV\_CACHE\_MAX\_SIZE}$ \textbf{then}
        \State \quad Store (\texttt{bytes\_buf}, $\mathrm{length}$) in \texttt{\_KOLMOGOROV\_CACHE}
    \State \textbf{endif}
    \State \textbf{return} $(\mathrm{length},\, \mathrm{chunk})$
\EndProcedure
\end{algorithmic}
\end{algorithm}

\noindent
\textbf{Algorithm: Compress Subchunk.}\\
This procedure compresses a given subchunk. If the subchunk is empty, it returns zero. Otherwise, it packs the integers into a binary buffer (e.g., uint8), checks a cache for a precomputed compressed length, compresses the buffer if needed, caches the result (subject to a maximum size), and returns the compressed length along with the original subchunk.

\vspace{1em}

\begin{algorithm}[H]
\caption{Kolmogorov-based Crop}
\begin{algorithmic}[1]
\Procedure{kolmogorovBasedCrop}{arr, target\_len, offset\_ratio = \textit{OFFSET\_RATIO},
                                pick = \texttt{'max'}, num\_candidates = \textit{NUM\_CANDIDATES}}
    \State $L \gets \mathrm{length}(\mathrm{arr})$
    \State \textbf{if} $L \leq \mathrm{target\_len}$ \textbf{then}
        \State \quad \textbf{return} arr
    \State \textbf{endif}
    \State $\mathrm{center} \gets (L - \mathrm{target\_len}) // 2$
    \State $\mathrm{max\_offset} \gets \left\lfloor (L - \mathrm{target\_len})\times \mathrm{offset\_ratio} \right\rfloor$
    \State tasks $\gets$ empty list
    \State \textbf{for} $j \gets 1$ \textbf{to} $\mathrm{num\_candidates}$ \textbf{do}
        \State \quad $\mathrm{off} \gets \texttt{randint}(-\mathrm{max\_offset},\, \mathrm{max\_offset})$
        \State \quad $\mathrm{start} \gets \mathrm{center} + \mathrm{off}$
        \State \quad \textbf{if} $\mathrm{start} < 0$ \textbf{then}
            \State \qquad $\mathrm{start} \gets 0$
        \State \quad \textbf{elseif} $\mathrm{start} > L - \mathrm{target\_len}$ \textbf{then}
            \State \qquad $\mathrm{start} \gets L - \mathrm{target\_len}$
        \State \quad \textbf{endif}
        \State \quad chunk $\gets$ arr[$\mathrm{start} : \mathrm{start} + \mathrm{target\_len}$]
        \State \quad Append $\mathrm{chunk}$ to tasks
    \State \textbf{endfor}
    \State \textbf{if} \texttt{pick} $=$ \texttt{'max'} \textbf{then}
        \State \quad $\mathrm{best\_val} \gets -\infty$
    \State \textbf{else}
        \State \quad $\mathrm{best\_val} \gets \infty$
    \State \textbf{endif}
    \State $\mathrm{best\_chunk} \gets \mathrm{None}$
    \State \textbf{with} \texttt{ThreadPoolExecutor}$(\mathrm{max\_workers} = \textit{NUM\_KOLMOGOROV\_THREADS})$ \textbf{as} ex
        \State \quad \textbf{for each} $(\mathrm{comp\_len},\, \mathrm{chunk})$ \textbf{in} ex.\texttt{map}(\textsc{compressSubchunk}, tasks)
            \State \qquad \textbf{if} $\bigl(\texttt{pick} = \texttt{'max'} \textbf{ and } \mathrm{comp\_len} > \mathrm{best\_val}\bigr)$ \textbf{or}
            \State \qquad\qquad $\bigl(\texttt{pick} \neq \texttt{'max'} \textbf{ and } \mathrm{comp\_len} < \mathrm{best\_val}\bigr)$ \textbf{then}
                \State \qquad\quad $\mathrm{best\_val} \gets \mathrm{comp\_len}$
                \State \qquad\quad $\mathrm{best\_chunk} \gets \mathrm{chunk}$
            \State \qquad \textbf{endif}
    \State \textbf{return} $\mathrm{best\_chunk}$ \textbf{if} $\mathrm{best\_chunk} \neq \mathrm{None}$ \textbf{else} arr
\EndProcedure
\end{algorithmic}
\end{algorithm}

\noindent
\textbf{Algorithm: Kolmogorov-based Crop.}\\
This procedure crops an array to a fixed length using a Kolmogorov complexity–based criterion. If the array length is less than or equal to the target length, it returns the original array. Otherwise, it generates multiple candidate subarrays by applying random offsets around the center. For each candidate, it computes a compressed length using the \textsc{compressSubchunk} procedure. Depending on whether the selection criterion is 'max' or otherwise, it selects the candidate with the maximum or minimum compressed length and returns it.

\vspace{1em}

\begin{algorithm}[H]
\caption{Entropy-based Crop}
\begin{algorithmic}[1]
\Procedure{EntropyBasedCrop}{arr, target\_len, full\_entropy, $n$, num\_candidates, offset\_ratio, $\alpha$, $\beta$}
    \State $L \gets \mathrm{length}(\mathrm{arr})$
    \State \textbf{if} $L \leq \mathrm{target\_len}$ \textbf{then}
        \State \quad \textbf{return} arr
    \State \textbf{endif}
    \State $\mathrm{center} \gets \left\lfloor \dfrac{L - \mathrm{target\_len}}{2} \right\rfloor$
    \State $\mathrm{max\_offset} \gets \left\lfloor (L - \mathrm{target\_len})\times \mathrm{offset\_ratio} \right\rfloor$
    \State \textbf{allocate} arrays $\mathrm{offsets}[0..\mathrm{num\_candidates}-1]$, $\mathrm{scores}[0..\mathrm{num\_candidates}-1]$, $\mathrm{starts}[0..\mathrm{num\_candidates}-1]$
    \State \textbf{for} $i=0$ \textbf{to} $\mathrm{num\_candidates}-1$ \textbf{do}
        \State \quad $\mathrm{offsets}[i] \gets \mathrm{RandomInteger}(-\mathrm{max\_offset},\, \mathrm{max\_offset})$
    \State \textbf{endfor}
    \State \textbf{for} $i=0$ \textbf{to} $\mathrm{num\_candidates}-1$ \textbf{do}
        \State \quad $\mathrm{off} \gets \mathrm{offsets}[i]$
        \State \quad $\mathrm{start} \gets \mathrm{Clip}(\mathrm{center} + \mathrm{off},\, 0,\, L - \mathrm{target\_len})$
        \State \quad $\mathrm{starts}[i] \gets \mathrm{start}$
        \State \quad chunk $\gets$ arr[$\mathrm{start} : \mathrm{start} + \mathrm{target\_len}$]
        \State \quad $\mathrm{cand\_ent} \gets \texttt{compute\_n\_gram\_entropy\_jit}(\mathrm{chunk},\, n)$ \Comment{entropy of the chunk, $N=1$}
        \State \quad $\mathrm{ent\_diff} \gets \bigl|\,\mathrm{cand\_ent} - \mathrm{full\_entropy}\,\bigr|$
        \State \quad \textbf{if} $\mathrm{max\_offset} > 0$ \textbf{then}
            \State \qquad $\mathrm{offset\_penalty} \gets |\mathrm{off}|/\mathrm{max\_offset}$
        \State \quad \textbf{else}
            \State \qquad $\mathrm{offset\_penalty} \gets 0$
        \State \quad \textbf{endif}
        \State \quad $\mathrm{scores}[i] \gets \alpha \times \mathrm{ent\_diff} + \beta \times \mathrm{offset\_penalty}$
    \State \textbf{endfor}
    \State $\mathrm{best\_idx} \gets \arg\min_i \,\mathrm{scores}[i]$
    \State \textbf{return} arr[$\mathrm{starts}[\mathrm{best\_idx}] : \mathrm{starts}[\mathrm{best\_idx}] + \mathrm{target\_len}$]
\EndProcedure
\end{algorithmic}
\end{algorithm}

\noindent
\textbf{Algorithm: Entropy-based Crop.}\\
This procedure crops an array to a fixed length by selecting the segment whose n‑gram entropy best approximates a given full entropy value. If the array is too short, it returns the original array. Otherwise, it generates multiple candidate segments using random offsets around the center. For each candidate, it computes the entropy difference and applies an offset penalty. The candidate with the minimal combined score (weighted by $\alpha$ and $\beta$, two real numbers) is chosen and returned. Any \(n\)-gram containing the padding token is ignored in the entropy calculation.

\vspace{1em}

All the previous algorithms are precomputed augmentations that are needed to achieve higher precisions and performances. It is our goal to demonstrate whether Shannon entropy and Ratio are useful in order to improve previous algorithms.
\noindent

\begin{algorithm}[H]
\caption{Algorithm: CNN Classifier Initialization}
\begin{algorithmic}[1]
\Procedure{init}{vocab\_size, token\_dim, num\_classes}
    \State \textbf{input:} vocab\_size, token\_dim, num\_classes
    \State embedding $\gets$ EmbeddingLayer(vocab\_size, token\_dim, padding\_idx=4)
    \State conv1 $\gets$ Conv1D(in\_channels=token\_dim, out\_channels=16, kernel\_size=3, stride=1, padding=1)
    \State conv2 $\gets$ Conv1D(in\_channels=16, out\_channels=16, kernel\_size=3, stride=1, padding=1)
    \State pool $\gets$ AdaptiveMaxPool1D(1)
    \State dropout $\gets$ Dropout(MODEL\_DROPOUT)
    \State relu $\gets$ ReLU()
    \State input\_dim $\gets$ 16
    \State fc\_final $\gets$ FullyConnected(input\_dim, num\_classes)
\EndProcedure
\end{algorithmic}
\end{algorithm}

\noindent
\textbf{Algorithm: CNN Classifier Initialization.}\\
This algorithm initializes the CNN classifier by setting up the embedding layer, convolutional layers, pooling and dropout.

\begin{algorithm}[H]
\caption{CNN Forward Pass}
\begin{algorithmic}[1]
\Procedure{forwardPass}{input\_sequence}
    \State $x \gets$ EmbeddingLayer(input\_sequence)
    \State $x \gets$ Transpose($x$) \Comment{Adjust dimensions for convolution}
    \State $x \gets$ Convolutional layers with ReLU activations
    \State $x \gets$ AdaptiveMaxPool1D($x$)
    \State $x \gets$ Dropout($x$)
    \State \Return FullyConnected($x$)
\EndProcedure
\end{algorithmic}
\end{algorithm}

\noindent
\textbf{Algorithm: CNN Forward Pass.}\\
This algorithm embeds the raw viral single nucleotide‑DNA tokens into vector form (or trinucleotide‑DNA tokens for mammal genes), transposes the resulting tensor to match the expected convolutional input format, applies convolutional filters with ReLU activations to extract local sequence patterns, reduces dimensionality via adaptive pooling, and finally returns classification logits through a fully connected layer.

\vspace{1em}

\subsection{Comparisons between the Models}

In our implementation, DNA sequences are not tokenized into trinucleotides, in order to deploy even less memory. Instead, each nucleotide is mapped to an integer in \(\{0,1,2,3,4\}\) (corresponding to \(A\), \(C\), \(G\), \(T\) and the padding symbol). For the CNN, the vocabulary size is therefore 5. Two configurations have been considered: one with a token dimension of 8 and one with a token dimension of 256. The parameter counts are computed as follows.

For the configuration with token dimension 8, the embedding layer has \(5 \times 8 = 40\) parameters. The first convolutional layer has \(8 \times 16 \times 3 + 16 = 400\) parameters. The second convolutional layer has \(16 \times 16 \times 3 + 16 = 784\) parameters. The final fully connected layer has \(16 \times 6 + 6 = 102\) parameters. The total is \(40+400+784+102=1326\) parameters.

For the configuration with token dimension 256, the embedding layer has \(5 \times 256 = 1280\) parameters. The first convolutional layer has \(256 \times 16 \times 3 + 16 = 12304\) parameters. The second convolutional layer has \(16 \times 16 \times 3 + 16 = 784\) parameters. The final fully connected layer has \(16 \times 6 + 6 = 102\) parameters. The total is \(1280+12304+784+102=14470\) parameters.

\begin{table}[h!]
\centering
\footnotesize
\renewcommand{\arraystretch}{1.1}
\resizebox{0.815\textwidth}{!}{%
\begin{tabular}{|l|l|c|}
\hline
\textbf{Configuration} & \textbf{Parameter Calculation} & \textbf{Total} \\
\hline
Token Dim = 8 & 
\begin{tabular}[c]{@{}l@{}}Embedding: \(5\times8=40\)\\ CNN Layers: \\[1mm]
\quad Conv1: \(8\times16\times3+16=400\)\\[1mm]
\quad Conv2: \(16\times16\times3+16=784\)\\[1mm]
FC Final: \(16\times6+6=102\)\\[1mm]
Total: \(40+400+784+102=1326\)
\end{tabular} 
& \(1326\) \\
\hline
Token Dim = 256  & 
\begin{tabular}[c]{@{}l@{}}Embedding: \(5\times256=1280\)\\ CNN Layers: \\[1mm]
\quad Conv1: \(256\times16\times3+16=12304\)\\[1mm]
\quad Conv2: \(16\times16\times3+16=784\)\\[1mm]
FC Final: \(16\times6+6=102\)\\[1mm]
Total: \(1280+12304+784+102=14470\)
\end{tabular} 
& \(14470\) \\
\hline
\end{tabular}%
}
\caption{Comparison of the number of parameters in the CNNs. In our implementation, the DNA sequence is represented using a vocabulary of 5 (one per nucleotide plus padding), with two configurations considered: one with token dimension 8 and one with token dimension 256.}
\end{table}

\section{Results}

\subsection{Viral Genes}

Two series of tests were conducted using the benchmark code with the implementation of several CNNs using the dataset of viral genes from \cite{ref14}. A total of \textbf{five} CNNs were implemented based on the previously described algorithms: for both the configuration with token dimension 8 and that with token dimension 256, we implemented one model using data augmentation via Random Crop, one using data augmentation via R‑Based Crop, one using data augmentation via Entropy-based Crop, one using data augmentation via Kolmogorov Complexity algorithm, one without data augmentation. For the token‑dim \(=256\) model, we only used 28 sequences for training and 13 for initial validation, then tested on 400 other sequences 40 times to get meaningful average test accuracies.\footnote{For each of the 40 runs, training was re‑initialised from scratch with \emph{different} random seeds, ensuring that the model never saw the test set during optimisation.} For the token‑dim \(=8\) model we used the whole viral training set (1320 sequences). Here are the results of the benchmarks:

\begin{table}[H]
\centering
\caption{CNN results (Token Dim = 256, training sequences = 28, development sequences = 13, T = 22, n = 1; macro-averaged metrics over 40 iterations) on \texttt{test\_set.csv} (400 sequences)}
\begin{tabular}{l c c c}
\hline
\textbf{Variant}               & \textbf{Accuracy}      & \textbf{Recall}        & \textbf{F1}            \\
                               & (mean $\pm$ std)           & (mean $\pm$ std)           & (mean $\pm$ std)           \\
\hline
Random Crop                    & 0.242 $\pm$ 0.034          & 0.265 $\pm$ 0.037          & 0.226 $\pm$ 0.039          \\
Ratio-based Crop               & 0.859 $\pm$ 0.048          & 0.875 $\pm$ 0.044          & 0.865 $\pm$ 0.045          \\
Entropy-based Crop             & 0.747 $\pm$ 0.069          & 0.770 $\pm$ 0.060          & 0.752 $\pm$ 0.065          \\
Kolmogorov-based Crop          & 0.393 $\pm$ 0.042          & 0.410 $\pm$ 0.042          & 0.380 $\pm$ 0.041          \\
No augmentation (basic crop)   & 0.236 $\pm$ 0.033          & 0.256 $\pm$ 0.032          & 0.219 $\pm$ 0.034          \\
\hline
\end{tabular}
\end{table}

Unless otherwise stated, the 95\% margin of error (MOE) is computed as
\begin{equation}
\mathrm{MOE} \;=\; 1.96 \,\frac{\sigma}{\sqrt{R}},
\end{equation}
where \(\sigma\) is the standard deviation across the \(R=40\) random restarts.

\begin{table}[H]
\centering
\caption{CNN results (Token Dim = 8, training sequences = 1320, development sequences = 180, T = 22, n = 1; macro-averaged metrics over 40 iterations) on \texttt{test\_set.csv} (400 sequences)}
\begin{tabular}{l c c c}
\hline
\textbf{Variant}             & \textbf{Accuracy}         & \textbf{Recall}            & \textbf{F1}                 \\
                              & (mean\,$\pm$\,std)        & (mean\,$\pm$\,std)         & (mean\,$\pm$\,std)          \\
\hline
Entropy-based Crop           & 0.893\,$\pm$\,0.036       & 0.899\,$\pm$\,0.032        & 0.893\,$\pm$\,0.035         \\
Ratio-based Crop             & 0.926\,$\pm$\,0.026       & 0.928\,$\pm$\,0.025        & 0.924\,$\pm$\,0.026         \\
Random Crop                  & 0.468\,$\pm$\,0.040       & 0.481\,$\pm$\,0.043        & 0.461\,$\pm$\,0.040         \\
Kolmogorov-based Crop        & 0.625\,$\pm$\,0.029       & 0.640\,$\pm$\,0.029        & 0.620\,$\pm$\,0.029         \\
No augmentation (basic crop) & 0.466\,$\pm$\,0.034       & 0.480\,$\pm$\,0.035        & 0.458\,$\pm$\,0.033         \\
\hline
\end{tabular}
\end{table}

The 95\% confidence-interval margin of error for accuracy is approximately $\pm1.1\%$ for the entropy-based method, $\pm0.8\%$ for ratio-based, and around $\pm1\%$ for the remaining variants.

\subsection{Human Genes with Repetitions}

For this other dataset~\cite{ref15} we only used the token‑dim \(=256\) configuration because of the scarcity of the data: only 11 sequences associated with expansion pathologies, while 13 not associated. A total of 24 sequences, split into 16 for training and 8 for testing over 40 iterations to get sound average accuracies.\footnote{Each of the 40 runs started from randomly initialised weights and did not reuse the test sequences during training.} 
Here we implemented the same cropping techniques as in the previous set of tests:

\begin{table}[H]
\centering
\caption{CNN results (Token Dim = 256, training sequences = 16, test sequences = 8, T = 98, n = 2; macro-averaged metrics over 40 iterations)}
\begin{tabular}{l c c c}
\hline
\textbf{Variant}             & \textbf{Accuracy}       & \textbf{Recall}          & \textbf{F1}               \\
                             & (mean $\pm$ std)            & (mean $\pm$ std)             & (mean $\pm$ std)              \\
\hline
Random Crop                  & 0.559 $\pm$ 0.068           & 0.559 $\pm$ 0.068            & 0.535 $\pm$ 0.073             \\
Ratio-based Crop             & 0.741 $\pm$ 0.047           & 0.741 $\pm$ 0.047            & 0.731 $\pm$ 0.051             \\
Entropy-based Crop           & 0.666 $\pm$ 0.046           & 0.666 $\pm$ 0.046            & 0.653 $\pm$ 0.048             \\
Kolmogorov-based Crop        & 0.569 $\pm$ 0.050           & 0.569 $\pm$ 0.050            & 0.551 $\pm$ 0.053             \\
No augmentation (basic crop) & 0.584 $\pm$ 0.050           & 0.584 $\pm$ 0.050            & 0.564 $\pm$ 0.053             \\
\hline
\end{tabular}
\end{table}

\noindent

The 95\% margin of error for the accuracy estimates ranges from approximately $\pm$2.1 \% for Random Crop to $\pm$1.5 \% Ratio-based Crop and $\pm$1.4\% for Entropy-based Crop, with intermediate values of $\pm$1.6\% for both Kolmogorov-based Crop and No augmentation.

\section{Discussion}
\subsection{Viral Genes}
Our tests have highlighted that data augmentation techniques based on the introduction of the parameter \(R\), which guides the data augmentation, \textbf{consistently} improve performance. The accuracy achieved with local pattern extractions neural network models such as CNNs on the viral DNA dataset, trained on only 41 sequences, is remarkably high (86.1\%), suggesting important potential applications in the study of cases where the number of available samples is extremely limited. 

Furthermore, the accuracy obtained (92.4\%) by the model with only 1326 parameters, occupying less than 10 kilobytes in \texttt{FP32} format, suggests practical implementations of very small neural networks capable of performing inference on devices with minimal RAM (only a few megabytes). This opens the door to the development of compact genetic self-testing devices.

These results would not be possible without the entropy distribution calculation discussed earlier and the formal introduction of the parameter \(R\). In fact, the other data augmentation techniques are clearly not as precise as our novel method based on \(R\) (although it is to be noted that the n-gram entropy cropping is novel too). This also suggests profound biological and genetic implications, where our entropy, given its effectiveness in distinguishing different gene types, as demonstrated in the benchmark, plays a fundamental role in decoding the genetic information contained in DNA.

We can explain these consistent improvements in performance by analysing the distribution of sequences from the viral gene dataset (\(training\_set.csv\)) on a plane. We define a plane where the vertical axis represents the GC content (commonly used in these classification problems) and the horizontal axis represents:

1. The Kolmogorov Complexity calculated on zlib compression.\\

2. The classical Shannon Entropy on the full sequences with \(n=1\).\\

3. The Ratio of the sequences with \(T=22, n=1\).\\

We obtain the following distributions for the six classes:

\begin{figure}[H]
    \centering
    \includegraphics[width=1.00\textwidth]{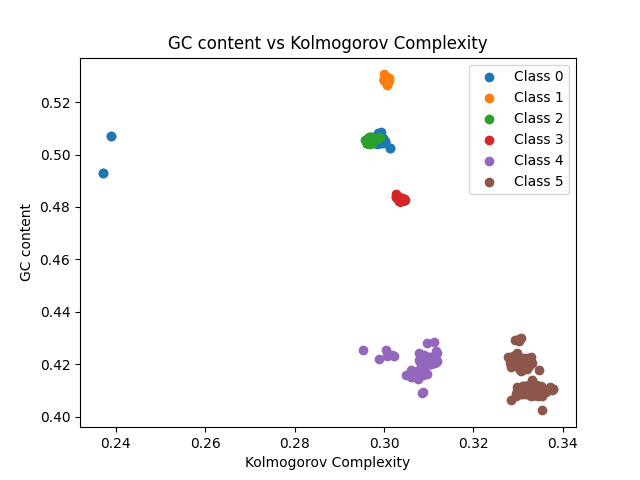}
    \caption{Distribution of the six viral classes \(training\_set.csv\) on the GC content--Kolmogorov Complexity (based on zlib compression) plane.}
    \label{fig:3.1}
\end{figure}

\begin{figure}[H]
    \centering
    \includegraphics[width=1.00\textwidth]{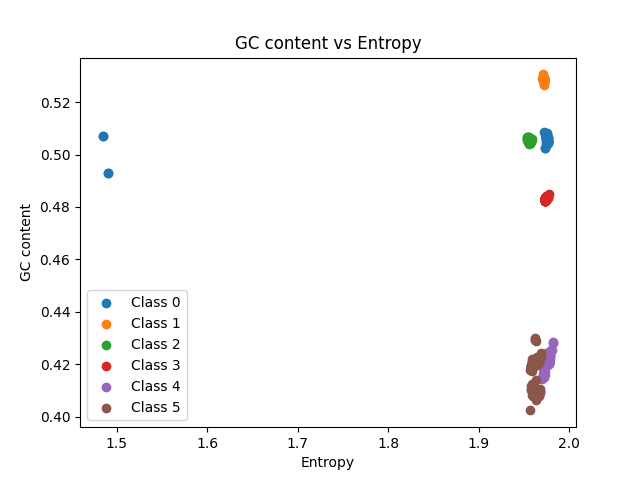}
    \caption{Distribution of the six viral classes \(training\_set.csv\) on the GC content--Shannon Entropy \(n=1\) plane.}
    \label{fig:3.2}
\end{figure}

\begin{figure}[H]
    \centering
    \includegraphics[width=1.00\textwidth]{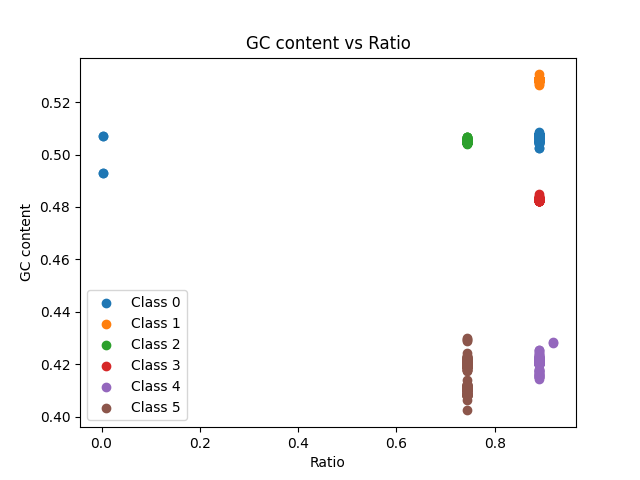}
    \caption{Distribution of the six viral classes \(training\_set.csv\) on the GC content--Ratio with \(T=22, n=1\) plane.}
    \label{fig:3.3}
\end{figure}

As we can see from the distributions, we have overlapping classes for the Kolmogorov Complexity (class 0 and class 2) and the Shannon Entropy (class 4 and class 5). For the R-based distribution of the DNA sequences, we have almost perfect discrimination among sequences with different \(R\) value and same GC Content value. Furthermore, the difference between classes with similar amount of GC content is larger in the case of Ratio (tenths), while the other methods only differentiate classes in the order of cents.

It is logical to think that cropping algorithms based on \(R\) will not only be more precise, but easier to fine-tune, being more stable on changes in the parameters of the Neural Network.

\subsection{Human Genes with Repetitions}

The value of these results is further supported by the sound results obtained on the dataset of human DNA sequences subject to expansion (73.4\% with only 24 sequences using the R-based crop).

We can plot similar distributions over this dataset, with similar results to the previous ones, using \(n=2\) for Shannon Entropy and \(T=98,n=2\) for Ratio:

\begin{figure}[H]
    \centering
    \includegraphics[width=1.00\textwidth]{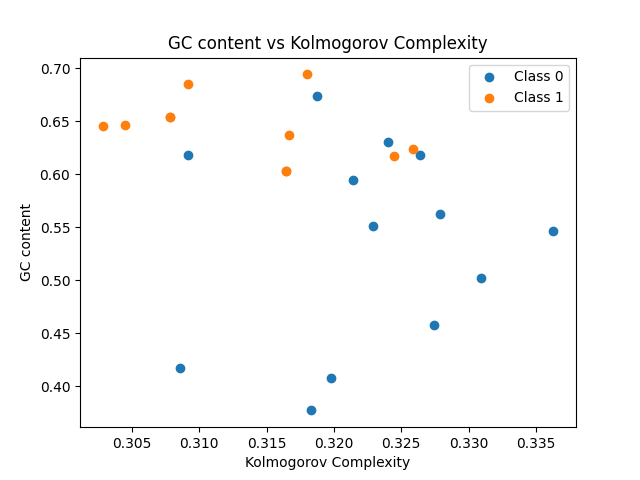}
    \caption{Distribution of the two classes (pathological and not pathological) \(human\_training\_set.csv\) on the GC content--Kolmogorov Complexity plane.}
    \label{fig:4.1}
\end{figure}

\begin{figure}[H]
    \centering
    \includegraphics[width=1.00\textwidth]{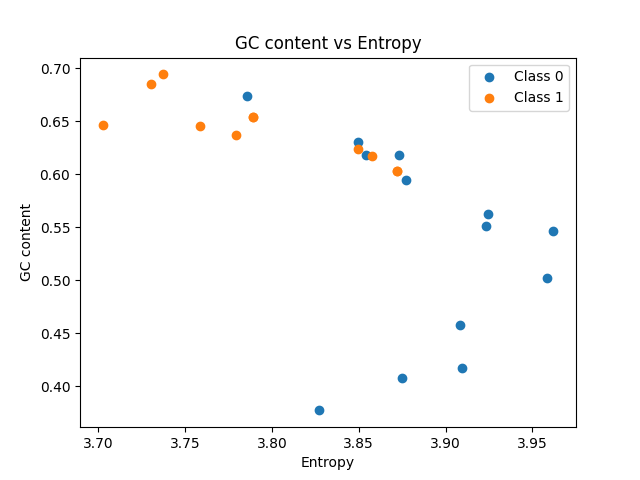}
    \caption{Distribution of the two classes (pathological and not pathological) \(human\_training\_set.csv\) on the GC content-- Shannon Entropy \(n=2\) plane.}
    \label{fig:4.2}
\end{figure}

\begin{figure}[H]
    \centering
    \includegraphics[width=1.00\textwidth]{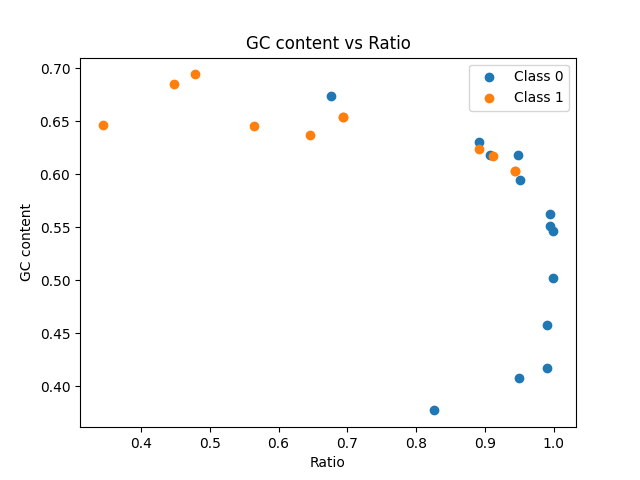}
    \caption{Distribution of the two classes (pathological and not pathological) \(human\_training\_set.csv\) on the GC content--Ratio \(T=98, n=2\) plane.}
    \label{fig:4.3}
\end{figure}

As we can see, while Kolmogorov Complexity gives less precise distinction between the two classes, both Shannon Entropy and Ratio give similar results, but the parameter \(R\) provides again a larger more precise representation and a larger gap between the two main clusters.

Hence, we can understand that, often, \(R\) tends to form a distinct cluster (with only negligible exceptions). This greatly facilitates the task of classifying DNA sequences.

\subsubsection{On information content.}
At fixed \((T,n)\), \(R\) is a monotone transform of the block entropy and therefore does not add raw information beyond \(S\); its benefit is that it normalizes values with respect to the full combinatorial distribution, making them directly comparable across sequences.

\section{Conclusions}

Our work shows that it is possible to introduce a new perspective on Shannon Entropy for DNA, based on the novel parameter \(R\), defined as the ratio that locates a target sequence within the distribution of all possible sequences of the same length. This is achieved by subdividing the sequence into fixed-length subsequences and non-overlapping \(n\)-tuples. By doing so, and thereby avoiding the limitations of classical Shannon Entropy (such as the tendency of entropy values to approach 2 for most sequences) we developed cropping algorithms based on the parameter \(R\) that \textbf{consistently} increase the precision of CNNs on small datasets or with limited parameters. Even though these results are preliminary and obtained only on two datasets one of viral classes and another of human genes subject to polynucleotide expansion (which we believe at this moment is the actual limitation of this novel method), we expect entropy to play an increasingly crucial role in the interpretation of genetic information, with all the resulting benefits in the medical and biological fields. Future work could include implementing these techniques in the field of pathological polynucleotide repetitions or creating compact devices that run very lightweight neural networks for classification. We look forward to applying these methods in the study of rare genetic diseases, which we believe could be the major impact of our work.

\section{Data availability}

The code used for the benchmarks is available in the GitHub repository \cite{ref15} along with a dataset of 24 human DNA sequences subject to polynucleotide expansion (pathological and not). The reference dataset for viral DNA sequences is taken from \cite{ref14}, consisting of 6 viral classes, with a training CSV file containing 1320 sequences, a development set with 180 sequences, and a test dataset with 400 sequences. The codes were written in Python using the PyTorch library\cite{ref34} that allows GPU hardware acceleration with CUDA\cite{ref35}. 
Values are macro-averages over 40 random restarts.\\
The machine used for running the code has the following components:\\
2 Intel Xeon Gold 6338\\
1024 GB DDR4 3200 ECC Registered RAM\\
Intel C621A Chipset\\
16 TB SATA/600 HDD (7200 rpm)\\
2.5" 7.68 TB NVMe PCIe 4.0 SSD\\
4 GPU RTX A5000 with 24 GB VRAM (230 W each)\\
OS: Ubuntu Server 24.04

\section{Competing interests}
No competing interest is declared.

\section{Acknowledgments}

The authors gratefully acknowledge the Department of Biology and Earth Science at the University of Calabria for providing access to their high-performance workstation, which was instrumental in conducting the computational experiments.
The authors thank Dr. Gabriel Antonio Videtta for his valuable suggestions in proving Theorem 1.

\end{document}